\documentclass[a4paper,11pt]{article}
\usepackage[a4paper,margin=2.5cm]{geometry}
\pdfoutput=1
\usepackage{setspace}
\usepackage[authoryear,round]{natbib}

\usepackage[latin1]{inputenc} 
\usepackage[english]{babel}
\usepackage{amsmath,amsthm,amssymb}
\usepackage[T1]{fontenc}
\usepackage{hyperref}
\usepackage{doi}

\newtheorem{thm}{Theorem}
\newtheorem{prop}[thm]{Proposition}

\newtheorem{cor}[thm]{Corollary}

\theoremstyle{definition}
\newcommand{\newdefinition}{\newtheorem}
\newdefinition{df}{Definition}
\newdefinition{rmk}{Remark}

\newcommand{\R}{{\mathbb{R}}}
\newcommand{\eqd}{\overset{\textrm{\,d}}=}

\renewcommand{\vec}{\boldsymbol}
\newcommand{\SIGMA}{\vec\varSigma}
\newcommand{\PI}{\vec\varPi}
\newcommand{\LAMBDA}{\vec\varLambda}
\newcommand{\THETA}{\vec\varTheta}
\newcommand{\GAMMA}{\vec\varGamma}
\newcommand{\rvek}{\mathbf}
\newcommand{\rv}{\mathrm}

\newcommand{\pf}{\vec \xi}
\newcommand{\Unif}{\rvek U}
\newcommand{\Radial}{\rv R}

\newcommand{\T}{{}^{\hspace{-1.25pt}{\top}}}
\newcommand{\conditioningset}{C}

\begin{document}

\title{Weighted selection from elliptical distributions: a stochastic
	representation and an application to portfolio separation}

\author{Nils Chr.~Framstad\thanks{Department of Economics, University of Oslo,
		P.O.\ Box 1095 Blindern, NO-0317 Oslo, Norway.
		E-mail: \href{mailto:ncf+research@econ.uio.no}{\texttt{ncf+research@econ.uio.no}}.}}

\date{\today}

\maketitle

\begin{abstract}
	We represent (weighted-)selection-elliptical distributions as an affine
	combination of the $q$ selection variables plus an elliptical term whose
	direction alone is independent. This form suffices for $q+2$ fund separation
	via first-order stochastic dominance, inter alia relaxing Simaan's (1993) three-fund
	assumptions.
\end{abstract}

\noindent\textit{Keywords:} Portfolio separation; mutual fund theorem;
stochastic dominance; skew-elliptical distributions; selection-elliptical
distributions.\\
\textit{MSC2020:} 91G10, 91G80, 60E15, 62H05.

\section{Introduction.}
Among the cornerstones of so-called modern portfolio theory \citep{Markowitz1952JF} are portfolio separation (a.k.a.\ \emph{mutual fund\/}) theorems: the property that a large number of investment opportunities can be replaced by a few market indices (``funds'') with no loss of welfare/utility to the investors. \citet{Tobin:LPABTR} initiated this line of results, most canonically with one risky and one riskless fund (so-called \emph{monetary\/} separation). It is beyond the scope of a short communication to give a comprehensive literature overview. This paper falls in line with the direction of \citet{MR58:4217}, who gave a characterization in terms of second-order stochastic dominance for validity across the class of increasing concave von Neumann--Morgenstern utility indices, though also briefly covering first-order stochastic dominance (allowing fewer distributions but all nondecreasing vNM utility indices). Seminal subsequent references include \citet{RePEc:bla:jfinan:v:38:y:1983:i:3:p:745-52} and
\citet{RePEc:eee:jetheo:v:29:y:1983:i:1:p:185-201} for the class of \emph{elliptically distributed\/} risky returns vector, the latter also characterizing the distributions that imply dependence upon mean and variance only. \citet{5d09257c-83da-306d-ade7-ed20d4117e2f} obtained three-fund separation for an
elliptical plus an independent non-elliptical (``skewness'') factor. 

This paper builds upon an earlier work \citep{Framstad20111862} obtaining
$q+2$ fund separation for skew-elliptical distributions  (e.g.\ \cite{MR1867257} and more generally \cite{MR2467905}) conditioned on $q$ variables -- therein, the argument was closely tied to the conditioning structure. We shall see that the conditioning structure is not at all crucial -- what matters is a stochastic form rather than conditioning: an
elliptical whose uniformly distributed direction is independent of everything else, plus
$q$ further random variables which may have arbitrary stochastic dependence with the radial part. We establish such a form beyond skew-elliptical distributions, where we can allow a reweighted draw for selection of the $q$ coordinates (Section~\ref{sec:distributions}). This stochastic representation form alone -- regardless of whether the $q$ additional sources of randomness originate from ellipticity -- is sufficient for $q+2$ fund separation over all non-decreasing vNM indices, using first-order stochastic dominance (Section~\ref{sec:portfolio}).

As a corollary, we get a more general version of the \cite{5d09257c-83da-306d-ade7-ed20d4117e2f} three-fund separation theorem -- his assumptions of both risk aversion and independence can be relaxed. Furthermore we contrast a particular case to 
\citet[Theorem 2]{RePEc:eee:jetheo:v:29:y:1983:i:1:p:185-201}. The discussion in Section~\ref{sec:discussion} will outline how the framework of first-order stochastic dominance is suited for multi-period models.

\section{Generalizations of elliptical distributions.}\label{sec:distributions}
We shall for random variables (boldface if multivariate) use upright capital letters beyond $Q$ % 
in the Latin alphabet; italics letters (Greek and Latin) are non-random. The application to portfolio theory will follow easily once the $n$-variate excess returns distribution can be represented as 
\begin{align}\label{Yrepr}
	\rvek Y = \vec\mu_0\Radial_0 + \vec M\rvek V + \Radial\vec A\Unif_\ell = \vec\mu_0\Radial_0 + \vec\mu_1\rv V_1+\dots+\vec\mu_q\rv V_q+ \Radial\vec A\Unif_\ell 
\end{align}
and we are going to arrive at that for various generalizations of the elliptical class (which, cf.\ \citet{MR629795} is the case $\vec M\rvek V\equiv\vec{0}$, $\Radial_0\equiv 1$). 
In \eqref{Yrepr}, the random elements are $\Radial_0\geq 0$ a.s., $\Radial\geq 0$ a.s.; the $q$-variate $\rvek V$; and finally $\Unif_\ell$ uniformly distributed on the (Euclidean) unit sphere $\mathbb S^{\ell-1}=\{\vec u\in\R^\ell;\; ||\vec u||=1\}$ and assumed \emph{independent of $(\Radial_0,\Radial,\rvek V)$\/} (thus we shall not enter into the dependent-direction generalization of \cite{frahm2004generalized}). Non-random are $\vec\mu_0\in\R^n$, $\vec M\in\R^{n\times q}$, and $\vec A\in\R^{n\times \ell}$. We shall make further assumptions for the application, but mention that $\vec\mu_0\Radial_0$ is there to have an ``unambiguous good'' in portfolio returns. 

We shall see how the skew-elliptical and more general \emph{selection-elliptical\/} families -- and beyond -- fit into the representation \eqref{Yrepr}.

\subsection{Skew-elliptical and (weighted-)selection-elliptical distributions}
The origin of skew-elliptical distribution theory is arguably \citet{azzalini1985class}; he traces the idea back to \citet{Birnbaum1950}. The literature has grown formidable, see e.g.\ the bibliography on his web page \citep{AzzaliniWebpage}.  %\url{http://azzalini.stat.unipd.it/SN/}, 
The selection-elliptical class is much less developed, the phrase being taken from \citet{arellanovalle2006unified}. In their introduction they also mention weighted distributions, which originate with \cite{rao1965discrete}, whose Section~3.4 already allows random weights. Our phrase ``weighted-selection-elliptical'' in Definition~\ref{defn:selection} reflects Rao's terminology, and the fact that we put a weighting $\varrho(\rvek Z)$ only on the \emph{selection\/} variable.

A skew-elliptical distribution is formed by considering some coordinates of an elliptical, conditioned upon the rest belonging to a translated orthant. Like \citet{Framstad20111862} we follow \citet{MR2467905}, skipping the shift parameter (to be reintroduced at the end) and considering
\begin{gather}\label{WZ}
	\begin{pmatrix}\rvek W\\\rvek Z
	\end{pmatrix}=
	\Radial_{\rvek W,\rvek Z}\: \vec L\vec D\Unif_{n+q},\quad \vec D=\begin{pmatrix}\SIGMA^{1/2}&\vec 0_{n\times q}\\ \vec 0_{q\times n}&\LAMBDA^{1/2}
	\end{pmatrix},\quad \vec L=\begin{pmatrix}\vec I_{n\times n}&\vec 0_{n\times q}\\ \PI&\vec I_{q\times q}
	\end{pmatrix}
\end{gather}
starting from an elliptical $\Radial\vec D\Unif$ with block-diagonal $\vec D$ (i.e.\ ``block-uncorrelated'' except no integrability assumed), where the blocks $\SIGMA\in \R^{n\times n}$ and $\LAMBDA\in\R^{q\times q}$ are positive-definite and  $\Unif_{n+q}$ is uniform on the unit sphere $\mathbb S^{n+q-1}$ and independent of the radial $\Radial=\Radial_{\rvek W,\rvek Z}\geq 0$ (not a.s.\ zero); every centered elliptical with nonsingular scatter admits that form on some probability space \citep[Theorem 1]{MR629795}. Then we transform by $\vec L$ to arrive at an $n$-variate $\rvek W$ and a $q$-variate $\rvek Z$.   We note that positive definiteness of $\SIGMA$ and $\LAMBDA$ yields positive definiteness of
\begin{gather}
	\vec L\vec D(\vec L\vec D)\T=\begin{pmatrix}\SIGMA &\SIGMA\PI\T\\
		\PI\SIGMA &\LAMBDA+\PI\SIGMA\PI\T
	\end{pmatrix}.
\end{gather} 

We may want to condition $\rvek W$ upon not only an orthant, but any non-null event for $\rvek Z$ -- indeed we shall allow for a (re-)weighted selection, covering also mixtures of conditionings:
\begin{df}\label{defn:selection}Suppose the probability measure $\mathbb P^{(\varrho)}$ is given by $d\mathbb P^{(\varrho)}=\varrho(\rvek Z)d\mathbb P$ for some Borel $\varrho:\mathbb R^q\to[0,\infty)$ with $\mathbb E[\varrho(\rvek Z)]=1$. The law of 
	$\vec\mu_0+\rvek W$ under $\mathbb P^{(\varrho)}$ is called \emph{weighted-selection-elliptical}; 
	under $\mathbb P^{(\varrho)}$, $\rvek Z$ has law $\varrho\,d\mathbb P_{\rvek Z}$ while conditional 
	laws given $\rvek Z$ are as under $\mathbb P$. The law is called  \emph{selection-elliptical\/} if  $\varrho(\rvek Z)=\frac{1_{\conditioningset}(\rvek Z)}{\mathbb P[\rvek Z\in\conditioningset]}$ for some $\conditioningset$ with $\mathbb P[\rvek Z\in\conditioningset]>0$ (so that $\mathbb P^{(\varrho)}$ is the conditional probability); and, \emph{skew-elliptical\/} if furthermore $\conditioningset$ is a translated orthant.
\end{df} 

These distributions do admit a stochastic representation as well. The decomposition below is the one underlying conditional distributions of elliptical vectors \citep[proof of Corollary~5]{MR629795}, there in law; we want it \emph{statewise.} Define 
\begin{gather}\label{vekM}
	\vec M = \SIGMA\PI\T(\LAMBDA+\PI\SIGMA\PI\T)^{-1}\\
	\label{vekA}\vec A=\bigl(\SIGMA-\vec M(\LAMBDA+\PI\SIGMA\PI\T)\vec M\T\bigr)^{1/2}=\bigl(\SIGMA^{-1}+\PI\T\LAMBDA^{-1}\PI\bigr)^{-1/2}\\
	\label{vekGammaTheta} \GAMMA=\begin{pmatrix}\vec A&\vec0\\\vec0&(\LAMBDA+\PI\SIGMA\PI\T)^{1/2}\end{pmatrix}\quad\text{and}\quad\THETA=\GAMMA^{-1}\begin{pmatrix}\vec I&-\vec M \\ \vec 0&\vec I
		\end{pmatrix}\vec L\vec D
\end{gather}Then $\vec M$ represents the prediction and $\vec A$ the positive-definite scale of the residual, so that the scale $\GAMMA$ of the elliptical $\left(\begin{smallmatrix}\rvek W-\vec M\rvek Z\\\rvek Z
\end{smallmatrix}\right)$ is block-diagonal: 
\begin{gather}\label{W-MZandZ}
	\begin{pmatrix}\rvek W-\vec M\rvek Z\\\rvek Z
	\end{pmatrix}=\Radial_{\rvek W,\rvek Z}\;\GAMMA\GAMMA^{-1}\begin{pmatrix}\vec I&-\vec M \\ \vec 0&\vec I
	\end{pmatrix}\vec L\vec D\Unif_{n+q}=\Radial_{\rvek W,\rvek Z}\;\GAMMA\THETA\Unif_{n+q}
\end{gather}
Furthermore, $\THETA$ is orthogonal, as $\left(\begin{smallmatrix}\vec I&-\vec M \\ \vec 0&\vec I
\end{smallmatrix}\right)\vec L\vec D(\vec L\vec D)\T\left(\begin{smallmatrix}\vec I&\vec0 \\ -\vec M\T&\vec I
\end{smallmatrix}\right)=\GAMMA\GAMMA\T$; thus 
 $\breve\Unif:=\THETA\,\Unif_{n+q}$ is also uniform on $\mathbb S^{n+q-1}$. Decompose $\breve\Unif\T=(\breve\Unif^*{}\T,\breve\Unif_*\T)$ for top $n$ and bottom $q$ components: 
\begin{gather}\label{eq:W-MZ}
	\rvek W-\vec M\rvek Z = \Radial_{\rvek W}\,\vec A\widetilde{\Unif}_n,\qquad\text{with}\quad\Radial_{\rvek W}:=\Radial_{\rvek W,\rvek Z}||\breve\Unif^*||,\qquad \widetilde{\Unif}_n:=\frac{\breve\Unif^*}{	||\breve\Unif^*||}
\end{gather}
holds statewise (a.s.), permitted as $||\breve\Unif^*||\neq 0$ a.s.; for the following proposition, we need to establish the distributional properties under $\mathbb P^{(\varrho)}$: 
\begin{prop}\label{prop:stochrep}
	Suppose $\SIGMA$ and $\LAMBDA$ positive definite and let $\mathbb P^{(\varrho)}$ be as in Definition~\ref{defn:selection}. Let $\vec\mu_0\in \R^n$ be non-random and  $\vec M$ and $\vec A$ be given by \eqref{vekM}--\eqref{vekA}. Put $\ell=n$, $\Radial_0\equiv 1$, $\rvek V=\rvek Z$, and 
	\begin{align}\label{eq:R}
		\Radial=\Radial_{\rvek W}=\bigl(\Radial_{\rvek W,\rvek Z}^2-\rvek V\T(\LAMBDA+\PI\SIGMA\PI\T)^{-1}\rvek V\bigr)^{1/2}.
	\end{align}
	Then $\vec\mu_0+\rvek W=\vec\mu_0+\vec M\rvek V+\Radial\vec A\widetilde\Unif_n$ holds 
	$\mathbb P$-a.s., hence $\mathbb P^{(\varrho)}$-a.s. Furthermore, under $\mathbb P^{(\varrho)}$, $\rvek V$ has law $\varrho\, d\mathbb P_{\rvek Z}$, and $\widetilde\Unif_n$ is uniform on $\mathbb S^{n-1}$ and independent 
	of $(\rvek V,\Radial)$. 
	
	In particular, every weighted-selection-elliptical law admits a stochastic representation \eqref{Yrepr}.
\end{prop}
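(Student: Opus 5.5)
The plan is to work entirely under $\mathbb P$ first, using the statewise decomposition \eqref{W-MZandZ}--\eqref{eq:W-MZ}, and then to transfer everything to $\mathbb P^{(\varrho)}$ through the fact that its density $\varrho(\rvek Z)$ is a function of $\rvek Z$ alone. The representation itself is immediate: \eqref{eq:W-MZ} gives $\rvek W=\vec M\rvek Z+\Radial_{\rvek W}\vec A\widetilde\Unif_n$ $\mathbb P$-a.s. It holds $\mathbb P^{(\varrho)}$-a.s.\ as well, since $\mathbb P^{(\varrho)}\ll\mathbb P$.

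It remains to identify $\Radial_{\rvek W}$ with \eqref{eq:R}. From \eqref{W-MZandZ}, $\rvek Z=\Radial_{\rvek W,\rvek Z}(\LAMBDA+\PI\SIGMA\PI\T)^{1/2}\breve\Unif_*$. Hence $\rvek Z\T(\LAMBDA+\PI\SIGMA\PI\T)^{-1}\rvek Z=\Radial_{\rvek W,\rvek Z}^2||\breve\Unif_*||^2$. Using $||\breve\Unif^*||^2+||\breve\Unif_*||^2=1$, we get
\begin{align*}
\Radial_{\rvek W}^2=\Radial_{\rvek W,\rvek Z}^2||\breve\Unif^*||^2=\Radial_{\rvek W,\rvek Z}^2-\rvek Z\T(\LAMBDA+\PI\SIGMA\PI\T)^{-1}\rvek Z ,
\end{align*}
which is \eqref{eq:R} with $\rvek V=\rvek Z$. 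The law of $\rvek V$ under $\mathbb P^{(\varrho)}$ is $\varrho\,d\mathbb P_{\rvek Z}$ by Definition~\ref{defn:selection}.

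The main step is the independence claim under $\mathbb P$: $\widetilde\Unif_n$ is uniform on $\mathbb S^{n-1}$ and independent of $(\Radial_{\rvek W,\rvek Z},\breve\Unif_*)$, and hence of $(\rvek Z,\Radial_{\rvek W})$. The latter pair is a measurable function of $(\Radial_{\rvek W,\rvek Z},\breve\Unif_*)$, because $||\breve\Unif^*||=(1-||\breve\Unif_*||^2)^{1/2}$.

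\begin{itemize}
\item Since $\breve\Unif=\THETA\Unif_{n+q}$ with $\THETA$ orthogonal and non-random, $\breve\Unif$ is uniform on $\mathbb S^{n+q-1}$ and still independent of $\Radial_{\rvek W,\rvek Z}$.
\item It therefore suffices to show that for $\breve\Unif$ uniform, $\breve\Unif^*/||\breve\Unif^*||$ is uniform and independent of $\breve\Unif_*$.
\item For this I would use the Gaussian construction: take $\breve\Unif=\rvek G/||\rvek G||$ with $\rvek G\sim N(\vec 0,\vec I_{n+q})$, split as $\rvek G=(\rvek G^*,\rvek G_*)$. Then $\breve\Unif^*/||\breve\Unif^*||=\rvek G^*/||\rvek G^*||$ and $\breve\Unif_*=\rvek G_*/(||\rvek G^*||^2+||\rvek G_*||^2)^{1/2}$.
\item Now $\rvek G^*/||\rvek G^*||$ is uniform and independent of $||\rvek G^*||$ (standard polar decomposition of the Gaussian), and also independent of $\rvek G_*$.
\item Hence it is independent of $(||\rvek G^*||,\rvek G_*)$, and so of $\breve\Unif_*$.
\end{itemize}

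Alternatively, one can argue by invariance: for any orthogonal $\vec O\in\R^{n\times n}$, applying $\mathrm{diag}(\vec O,\vec I)$ leaves the law of $(\breve\Unif,\Radial_{\rvek W,\rvek Z})$ unchanged. So the conditional law of $\widetilde\Unif_n$ given $(\Radial_{\rvek W,\rvek Z},\breve\Unif_*)$ is rotation invariant, hence uniform.

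Transfer to $\mathbb P^{(\varrho)}$ is then easy. Take bounded Borel $f,g$. Since $\varrho(\rvek Z)$ is a function of $(\rvek Z,\Radial)$, independence under $\mathbb P$ gives
\begin{align*}
\mathbb E^{(\varrho)}[f(\widetilde\Unif_n)g(\rvek V,\Radial)]=\mathbb E[f(\widetilde\Unif_n)]\,\mathbb E[\varrho(\rvek Z)g(\rvek Z,\Radial)]=\mathbb E[f(\widetilde\Unif_n)]\,\mathbb E^{(\varrho)}[g(\rvek V,\Radial)] .
\end{align*}
Taking $g\equiv1$ shows $\widetilde\Unif_n$ keeps its uniform law under $\mathbb P^{(\varrho)}$, and the factorization then gives independence. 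Finally, with $\Radial_0\equiv1$ and $\ell=n$, this is exactly \eqref{Yrepr}.

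I expect the only real obstacle to be the independence in the main step: $\widetilde\Unif_n$ must be independent of $\breve\Unif_*$ (equivalently of $||\breve\Unif^*||$), not merely of the radial part. The Gaussian argument handles this cleanly.
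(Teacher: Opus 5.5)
Your proposal is correct and follows the paper's proof essentially step for step: the statewise identity \eqref{eq:W-MZ}, the radial formula via $\|\breve\Unif^*\|^2+\|\breve\Unif_*\|^2=1$, independence under $\mathbb P$, and the transfer to $\mathbb P^{(\varrho)}$ by the same factorization argument with $g\equiv 1$. The only difference is in the independence step under $\mathbb P$. The paper cites Lemma~2 of \citet{MR629795} for the mutual independence of $\|\breve\Unif^*\|$, $\widetilde\Unif_n$ and $\breve\Unif_*/\|\breve\Unif_*\|$, whereas you prove the weaker but sufficient independence of $\widetilde\Unif_n$ from $\breve\Unif_*$ directly via the Gaussian construction. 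That argument is valid because $\rvek G_*$ is independent of all of $\rvek G^*$, hence of the already independent pair $(\rvek G^*/\|\rvek G^*\|,\|\rvek G^*\|)$; independence from each component separately would not suffice, so you should phrase that step this way.
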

\begin{proof}
	The statewise identity is \eqref{eq:W-MZ}. \eqref{eq:R} follows from $\rvek Z\T(\LAMBDA+\PI\SIGMA\PI\T)^{-1}\rvek Z=\Radial_{\rvek W,\rvek Z}^2||\breve\Unif_*||^2=\Radial_{\rvek W,\rvek Z}^2-\Radial_{\rvek W,\rvek Z}^2||\breve\Unif^*||^2=\Radial_{\rvek W,\rvek Z}^2-\Radial_{\rvek W}^2$. What remains is to show that $\widetilde\Unif_n$ is independent of $(\rvek Z,\Radial_{\rvek W})$ and uniform on $\mathbb S^{n-1}$ under $\mathbb P^{(\varrho)}$. This will follow once established under $\mathbb P$, for then 
	\begin{align}\label{eq:ErhoU}
	\mathbb E^{(\varrho)}\bigl[g(\widetilde\Unif_n)h(\rvek Z,\Radial_{\rvek W})\bigr]
	=\mathbb E\bigl[g(\widetilde\Unif_n)\,h(\rvek Z,\Radial_{\rvek W})\varrho(\rvek Z)\bigr]
	=\mathbb E[g(\widetilde\Unif_n)]\;\mathbb E^{(\varrho)}\bigl[h(\rvek Z,\Radial_{\rvek W})\bigr]
	\end{align}(any bounded measurable $g$, $h$). Case $h\equiv1$ yields $\mathbb E^{(\varrho)}\bigl[g(\widetilde\Unif_n)\bigr]
	=\mathbb E[g(\widetilde\Unif_n)]$ thus same distribution for every $\varrho$; plug that for $\mathbb E[g(\widetilde\Unif_n)]$ in \eqref{eq:ErhoU} and we have independence.
	
	The argument under $\mathbb P$: $\breve\Unif$ inherits the independence of 
	$\Radial_{\rvek W,\rvek Z}$ from $\Unif_{n+q}$, as $\THETA$ is non-random. From 
	\cite[Lemma 2]{MR629795}, $\|\breve\Unif^*\|$, $\widetilde\Unif_n$ and 
	$\breve\Unif_*/\|\breve\Unif_*\|$ are independent, with $\widetilde\Unif_n$ uniform on 
	$\mathbb S^{n-1}$. Hence $\Radial_{\rvek W,\rvek Z}$, $\widetilde\Unif_n$, 
	$\|\breve\Unif^*\|$, $\breve\Unif_*/\|\breve\Unif_*\|$ are mutually independent. $\Radial_{\rvek W}$ and $\rvek Z=\Radial_{\rvek W,\rvek Z}(\LAMBDA+\PI\SIGMA\PI\T)^{1/2}
	\breve\Unif_*$ are functions of the first and third of these, 
	thus $\widetilde\Unif_n$ is independent of $(\rvek Z,\Radial_{\rvek W})$.
\end{proof}

\begin{rmk}\label{remark}
To our knowledge, this representation has not been applied in the weighted-selection case, nor in the selection-elliptical case except when the underlying is Gaussian or we condition upon a translated orthant. For the skew-elliptical situation we refer to e.g.\ \citet[formula (8)]{arellanovalle2010multivariate}, while for the selection-\emph{normal\/} (i.e.\ underlying Gaussian) it appears in \cite{arellanovalle2006unified}, who discuss the difficulty of the dependence structure; only in the case of an underlying Gaussian is the radial independent of $\rvek V$ under $\mathbb P$ \citep[Theorem 6 and subsequent remark (d)]{MR629795}. However, the uniform-on-unit-sphere is still independent no matter the distribution of the radial. 
\end{rmk}

\section{Univariate projections and the portfolio selection application.}\label{sec:portfolio} This section shall establish portfolio separation -- the property that a full market can be represented by a set of indices (``funds'') at no welfare loss -- when the risky excess returns obey \eqref{Yrepr}. Although it is customary to assume risk aversion, we shall make the weaker assumption of preference orderings consistent with first-order stochastic dominance, i.e.\ valid for all von Neumann--Morgenstern preferences with nondecreasing utility index.

Consider an agent initially owning $w_0$ and making a single-period portfolio choice over a ``risk free'' opportunity that yields $\rv Y_0$ and $n$ ``risky'' opportunities that yield $\rv Y_0\vec{1}+\rvek Y$, so that $\rvek Y$ is the vector of \emph{excess\/} returns. Investing amounts $\xi_j$ in risky opportunity number $j$ and the remaining holding $w_0-\pf\T\vec{1}$ in the riskless, they are left with terminal wealth 
\begin{align}\label{terminalwealth}
	w_0\rv Y_0 + \pf\T\rvek Y
\end{align}
Common is to assume $\rv Y_0$ non-random (cf.\ the typical terminology ``risk free'' or ``riskless''), but we can just specify $\rvek Y$ conditionally on $\rv Y_0$. We can also accommodate the classical situation where no riskless opportunity is available, by introducing it fictitiously and constraining $\pf$ to the hyperplane $\pf\T\vec{1}=w_0$; in that case, just put $\rv Y_0$ to zero. 

Although it would be natural to impose a no-arbitrage assumption, we only need that every non-trivial univariate projection has either some exposure to $\rvek V$ or some exposure to the elliptical:
\begin{align}\label{eq:nonsingular}
	\vec A\vec A\T+\vec M\vec M\T\qquad\text{is positive definite.}
\end{align}
Then  $q+2$ fund separation (where the last fund is the risk-free opportunity, except this is replaced by a different one if $w_0-\pf\T\vec{1}\equiv 0$ is imposed) can be formulated as follows:

\begin{prop}[$(2+\text{\rm rank}(\vec M))$ fund separation]\label{prop:main}
	Let $q<n$. Consider formula \eqref{terminalwealth} where, conditional on $\rv Y_0$, $\rvek Y$ is distributed as in \eqref{Yrepr} where $\Radial_0\geq 0$ a.s.\ (and w.l.o.g.\  $\Radial\geq 0$ a.s.), and where $\vec A$, $\vec M$, $\vec \mu_0$ are non-random and \eqref{eq:nonsingular} holds. Let 
	\begin{gather}
		\vec f_j=(\vec A\vec A\T+\vec M\vec M\T)^{-1} \vec \mu_j,\qquad \text{for } j=1,\dots,q
	\end{gather}
	and also for $j=0$ in the case  $\vec\mu_0\not\in\text{\rm col}(\vec M)$. If  $\vec\mu_0$ lies in the column space, let instead  $\vec f_0$ be a(ny) nonzero vector $\in  \text{\rm ker}(\vec M\T)$.
		
	Then for each $\pf\in\R^n$ there exist real $c_0,\dots,c_q$ and $\varpi\geq0$, such that with $\pf^*=c_0\vec f_0+\dots+c_q\vec f_q$, we have the first-order stochastic dominance
	\begin{align}\label{eq:FOSD}
		w_0\rv Y_0 + \pf^*{}\T\rvek Y\eqd w_0\rv Y_0 + \pf\T\rvek Y + \varpi \Radial_0,\qquad \varpi\Radial_0\geq 0\ \text{a.s.}
	\end{align}
	
	Furthermore, to any linear constraint $\pf\T\vec a=\alpha$ (or $\leq \alpha$) there will be (unless already spanned) another $\vec f_{\vec a}=(\vec A\vec A\T+\vec M\vec M\T)^{-1} \vec a$; i.e., as if expanding $\vec M$ by another column $\vec a$. In particular, that goes for the constraint $\pf\T\vec{1}=w_0$ which removes the risk-free opportunity (thus also as a fund) and also results in $q+2$ fund separation. 
\end{prop}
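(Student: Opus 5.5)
The plan is to reduce the law of terminal wealth to a few scalar statistics of $\pf$, and then match those statistics inside $\text{span}(\vec f_0,\dots,\vec f_q)$. We work conditionally on $\rv Y_0$ throughout. By \eqref{Yrepr},
\begin{align*}
w_0\rv Y_0+\pf\T\rvek Y=w_0\rv Y_0+(\pf\T\vec\mu_0)\Radial_0+(\vec M\T\pf)\T\rvek V+\Radial\,(\vec A\T\pf)\T\Unif_\ell .
\end{align*}
Because $\Unif_\ell$ is uniform on $\mathbb S^{\ell-1}$ and independent of $(\Radial_0,\Radial,\rvek V)$, we can use rotations. For any two vectors $\vec x,\vec y\in\R^\ell$ with $\|\vec x\|=\|\vec y\|$, pick an orthogonal $\vec O$ with $\vec O\T\vec x=\vec y$. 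Then $\vec O\Unif_\ell\eqd\Unif_\ell$, still independent of the rest, so
\[
(\Radial_0,\Radial,\rvek V,\vec x\T\vec O\Unif_\ell)\eqd(\Radial_0,\Radial,\rvek V,\vec x\T\Unif_\ell),
\]
and $\vec x\T\vec O\Unif_\ell=\vec y\T\Unif_\ell$. Consequently the joint law of $(\Radial_0,\Radial,\rvek V,(\vec A\T\pf)\T\Unif_\ell)$, and hence the conditional law of terminal wealth, depends on $\pf$ only through $\pf\T\vec\mu_0$, $\vec M\T\pf$ and $\|\vec A\T\pf\|$. It therefore suffices to find $\pf^*$ in the span with
\[
\vec M\T\pf^*=\vec M\T\pf,\qquad \|\vec A\T\pf^*\|=\|\vec A\T\pf\|,\qquad \varpi:=\vec\mu_0\T(\pf^*-\pf)\geq 0 .
\]
Applying the rotation argument to $\vec x=\vec A\T\pf$, $\vec y=\vec A\T\pf^*$ then gives $w_0\rv Y_0+\pf^*{}\T\rvek Y\eqd w_0\rv Y_0+\pf\T\rvek Y+\varpi\Radial_0$, which is \eqref{eq:FOSD}.

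To construct $\pf^*$, use the inner product $\langle\vec x,\vec y\rangle_B=\vec x\T\vec B\vec y$ with $\vec B=\vec A\vec A\T+\vec M\vec M\T$, which is an inner product by \eqref{eq:nonsingular}. Since $\vec f_j=\vec B^{-1}\vec\mu_j$, we have $\langle\vec\eta,\vec f_j\rangle_B=\vec\eta\T\vec\mu_j$. Hence the $B$-orthogonal complement of $F_1=\text{span}(\vec f_1,\dots,\vec f_q)$ is exactly $\ker(\vec M\T)$, which is nontrivial because $q<n$.
\begin{itemize}
\item Decompose $\pf=\pf_1+\vec\eta$ with $\pf_1\in F_1$ and $\vec M\T\vec\eta=0$, so that $\vec M\T\pf_1=\vec M\T\pf$.
\item For any $\vec\eta'\in\ker(\vec M\T)$, orthogonality gives
\[
\|\vec A\T(\pf_1+\vec\eta')\|^2=\|\pf_1+\vec\eta'\|_B^2-\|\vec M\T\pf\|^2=\|\pf_1\|_B^2+\|\vec\eta'\|_B^2-\|\vec M\T\pf\|^2 .
\]
So any $\vec\eta'\in\ker(\vec M\T)$ with $\|\vec\eta'\|_B=\|\vec\eta\|_B$ preserves both $\vec M\T\pf$ and $\|\vec A\T\pf\|$.
\end{itemize}

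It remains to choose $\vec\eta'$ in the span so that $\vec\mu_0\T\vec\eta'\ge\vec\mu_0\T\vec\eta$. Two cases arise.
\begin{itemize}
\item If $\vec\mu_0\notin\text{col}(\vec M)$, let $\vec g$ be the $B$-projection of $\vec f_0$ onto $\ker(\vec M\T)$. Then $\vec g\neq0$, and $\vec g=\vec f_0-(\text{element of }F_1)$ lies in the span. For $\vec\eta'\in\ker(\vec M\T)$ we have $\vec\mu_0\T\vec\eta'=\langle\vec f_0,\vec\eta'\rangle_B=\langle\vec g,\vec\eta'\rangle_B$. By Cauchy--Schwarz this is maximised under $\|\vec\eta'\|_B=\|\vec\eta\|_B$ by $\vec\eta'=(\|\vec\eta\|_B/\|\vec g\|_B)\vec g$, so that choice gives $\varpi\ge0$.
\item If $\vec\mu_0\in\text{col}(\vec M)$, then $\vec\mu_0\T\vec\eta'=0$ on all of $\ker(\vec M\T)$. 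Take $\vec\eta'$ to be a multiple of the given $\vec f_0\in\ker(\vec M\T)$ with the right $B$-norm; then $\varpi=0$.
\end{itemize}
In both cases $\pf^*=\pf_1+\vec\eta'$ is of the form $c_0\vec f_0+\dots+c_q\vec f_q$.

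For a linear constraint $\pf\T\vec a=\alpha$, append $\vec a$ as an extra column of $\vec M$ with a fictitious component $\rv V_{q+1}\equiv0$. This leaves \eqref{Yrepr} and the law of $\rvek Y$ unchanged and enlarges $\vec B$, so \eqref{eq:nonsingular} still holds. The construction now also preserves $\pf\T\vec a$, so feasibility is kept and $\vec f_{\vec a}$ appears as an additional fund. For an inequality constraint, $\pf^*{}\T\vec a=\pf\T\vec a\le\alpha$ is likewise preserved. One caveat: $\vec f_{\vec a}$ and the $\vec f_j$ are then formed with the enlarged $\vec B$, and I expect this bookkeeping to need checking. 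The step needing most care is the first one: dominance must be in joint law with $(\Radial_0,\Radial,\rvek V)$, not merely in marginal law of the $\Unif_\ell$-term. The rotation argument handles this precisely because $\Unif_\ell$ is independent of all the other random elements, conditionally on $\rv Y_0$.
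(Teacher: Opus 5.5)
Your proof is correct, but it reaches $\pf^*$ by a different route from the paper's.

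\textbf{Reduction step.} This part agrees with the paper: the law of terminal wealth depends on $\pf$ only through $\pf\T\vec\mu_0$, $\vec M\T\pf$ and $\pf\T\vec A\vec A\T\pf$. You get it by rotating $\Unif_\ell$, which gives the joint law with $(\Radial_0,\Radial,\rvek V)$ directly. The paper conditions on everything but $\Unif_\ell$ and integrates out its characteristic function $\psi_\ell$. The two arguments are equivalent.

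\textbf{Construction step.} The paper maximizes $\pf\T\vec\mu_0$ over the compact set $\{\vec M\T\pf=\vec d,\ \pf\T\vec B\pf=\sigma^2+\vec d\T\vec d\}$, where $\vec B=\vec A\vec A\T+\vec M\vec M\T$ as in your notation. It then reads off from the Fritz John conditions that the maximizer lies in $\text{span}\{\vec B^{-1}\vec\mu_j\}$ whenever $\lambda^*\neq0$. This needs a w.l.o.g.\ full-rank reduction of $\vec M$, used to show that $\lambda^*=0$ forces $\vec\mu_0\in\text{col}(\vec M)$. The degenerate case is handled separately, by an intermediate-value argument along $\pf_*+\eta\vec f_0$, where $\pf_*$ is the constrained minimizer of the quadratic form.

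You instead use the $\vec B$-orthogonal splitting $\R^n=F_1\oplus\ker(\vec M\T)$ and solve the same program in closed form:
\begin{itemize}
\item your $\pf_1$ is exactly the paper's $\pf_*$;
\item your $\pf_1+(\|\vec\eta\|_B/\|\vec g\|_B)\vec g$ is, by Cauchy--Schwarz, the explicit maximizer of the paper's program.
\end{itemize}
What your route buys: explicit coefficients $c_j$, no appeal to compactness or multiplier rules, no full-rank reduction, and one mechanism for both cases. It also makes visible that only the $\vec B$-projection $\vec g$ of $\vec B^{-1}\vec\mu_0$ onto $\ker(\vec M\T)$ matters. What the paper's route buys: the Lagrangian argument extends mechanically to further (in)equality constraints, which is why the paper disposes of the constrained extension as ``the same Fritz John argument.'' Your fictitious column $\rv V_{q+1}\equiv0$ achieves the same within your framework.

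\textbf{Your bookkeeping caveat.} It resolves in one line. Put $\vec B_{\vec a}=\vec B+\vec a\vec a\T$ and let $S$ be any subspace containing $\vec a$. Then $\vec B_{\vec a}\vec B^{-1}\vec s=\vec s+\vec a(\vec a\T\vec B^{-1}\vec s)\in S$ for $\vec s\in S$, so $\vec B_{\vec a}^{-1}S=\vec B^{-1}S$. Hence the fund span is the same whether the funds are formed with $\vec B$ or with $\vec B_{\vec a}$. One point remains: in the constrained problem, the degenerate case is decided by whether $\vec\mu_0\in\text{col}([\vec M,\vec a])$. In that case $\vec f_0$ must be taken in $\ker([\vec M,\vec a]\T)$, not merely in $\ker(\vec M\T)$, which the statement itself glosses over.
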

Before passing to the proof, we note that the typical textbook exposition valid for risk-averse agents will have fewer  degeneracies and will not need the ``alternative case dispersion-only'' $\vec f_0$. 
\begin{proof}Assume w.l.o.g.\ that $\vec M$ has full rank; if not, rewriting $\vec M\rvek V$ does not change kernel nor column space (nor the \emph{span\/} of the funds). The distribution of the portfolio return depends on $\pf\T\vec\mu_0$ (a good, as $\Radial_0\geq 0$), $\vec d:=\vec M\T\pf$ and the value $\sigma^2:=\pf\T\vec A\vec A\T\pf$ of the quadratic form. For a $\pf^*$ so that $\vec M\T\pf^*=\vec d$,  $\pf^*{}\T\vec A\vec A\T\pf^*=\sigma^2$, and $\varpi:=(\pf^*-\pf)\T\vec\mu_0$ is $\geq 0$, the characteristic functions of either side of \eqref{eq:FOSD} match. To see this, condition first upon everything but $\Unif_\ell$, which has chf  $=\psi_\ell(\vec\theta\T\vec\theta)$. By independence, integrate out $\Unif_\ell$ to get the equality:
	\begin{align}
\begin{split}
			\mathbb E\bigl[\exp\bigl(\mathrm i\vartheta (w_0\rv Y_0+\pf\T\rvek Y + \varpi\Radial_0)\bigr)\bigr]&=
	\mathbb E\bigl[\exp\bigl(\mathrm i\vartheta (w_0\rv Y_0+\pf^*{}\T\vec\mu_0\Radial_0+\vec d\T\rvek V)\bigr)\cdot\psi_\ell(\vartheta^2\Radial^2\sigma^2)\bigr]\\
	&=\mathbb E\bigl[\exp\bigl(\mathrm i\vartheta (w_0\rv Y_0+\pf^*{}\T\rvek Y)\bigr)\bigr]
\end{split}
	\end{align}
	
	In the presence of  $\vec M\T\pf=\vec d$, then $\pf\T\vec A\vec A\T\pf=\pf\T(\vec A\vec A\T+\vec M\vec M\T)\pf-\vec d\T\vec d$, so we can maximize $\pf\T\vec\mu_0$ subject to  $\vec M\T\pf=\vec d$ and $\pf\T(\vec A\vec A\T+\vec M\vec M\T)\pf=\sigma^2+\vec d\T\vec d$ (forming an ellipsoid, so the max exists). The Fritz John conditions:
	\begin{align}
		\vec\mu_0\lambda_0 + \vec M\vec\lambda = \lambda^*\cdot(\vec A\vec A\T+\vec M\vec M\T)\pf^*
	\end{align}
	with multipliers not all zero. Provided $\lambda^*\neq 0$, we must have 
	\begin{align}\label{eq:pfSTAR}
		\pf^*=\tfrac 1{\lambda^*}\bigl[\lambda_0(\vec A\vec A\T+\vec M\vec M\T)^{-1}\vec\mu_0 +\dots+\lambda_q(\vec A\vec A\T+\vec M\vec M\T)^{-1}\vec\mu_q\bigr]
	\end{align}
	Fritz John also covers ($\lambda_0=0$) the point $\pf_*$ which subject to $\vec M\T\pf=\vec d$ minimizes $\pf\T\vec A\vec A\T\pf$ (equivalently, minimizes $\pf\T(\vec A\vec A\T+\vec M\vec M\T)\pf$). The case $\lambda^*=0$ entails $\vec\mu_0\in\text{\rm span}\{\vec \mu_1,\dots,\vec\mu_q\}$ (by full rank of $\vec M$), so assume now so. Let then instead $\vec f_0\neq\vec0$ be orthogonal to the span, consider $\pf^*=\pf_*+\eta \vec f_0$ and vary $\eta$. Then $\pf^*{}\T[\vec \mu_0\Radial_0+\vec M\rvek V]$ is independent of $\eta$ (with $\vec \mu_0$ in the span,  $\vec M\T\pf$ pins down $\pf\T\vec\mu_0$), while the quadratic form ranges from its minimum at $\eta=0$ and to infinity, since $\vec A\vec A\T$ is positive definite on $\text{\rm ker}(\vec M\T)$; choose $\eta$ so that $\pf^*{}\T\vec A\vec A\T\pf^*=\pf\T\vec A\vec A\T\pf$, and we have identical distribution (i.e.\ $\varpi=0$).
	
	Any additional linear constraints on the portfolio itself can be accommodated straightforward through the same Lagrange/Fritz John argument, including the absence of risk-free opportunity.
\end{proof}
To recover -- and generalize -- the $2+\text{\rm rank}(\PI)$ funds of \citet{Framstad20111862}, the following is arguably more a corollary to the proof than to the proposition statement:
\begin{cor}\label{cor:recoverold}
	For the (weighted-)selection-elliptical situation of Proposition~\ref{prop:stochrep}, funds $\vec f_1,\dots,\vec f_q$ can be taken as the columns of $\PI\T$ and in the situation where $\vec\mu_0$ is not in the column space of $\SIGMA\PI\T$, then $\vec f_0$ can be taken to be $\SIGMA^{-1}\vec\mu_0$. In the degenerate case, take $\vec 0\neq \vec f_0\in\textrm{\rm ker}(\PI\SIGMA)$. 
\end{cor}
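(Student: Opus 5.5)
The plan is to exploit the freedom, used already in the proof of Proposition~\ref{prop:main}, to reparametrize the term $\vec M\rvek V$ without changing the law of any portfolio return. The proof only uses three facts: the distribution of $\pf\T\rvek Y$ is pinned down by $\pf\T\vec\mu_0$, $\vec M\T\pf$ and $\pf\T\vec A\vec A\T\pf$; the funds enter only through their \emph{span}, since the coefficients $c_j$ are free; and the Fritz John argument goes through with any positive definite quadratic form $\pf\T(\vec A\vec A\T+\vec M'\vec M'{}\T)\pf$, where $\vec M'\rvek V'=\vec M\rvek V$ is an alternative factorization. So I would pick the factorization that turns this quadratic form into $\SIGMA$.

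\textbf{Step 1 (rescaling).} Put $\vec K=\LAMBDA+\PI\SIGMA\PI\T$, which is positive definite. Write $\vec M\rvek Z=\vec M'\rvek V'$ with $\vec M'=\vec M\vec K^{1/2}=\SIGMA\PI\T\vec K^{-1/2}$ and $\rvek V'=\vec K^{-1/2}\rvek Z$. By Proposition~\ref{prop:stochrep}, the representation \eqref{Yrepr} still holds with $\rvek V'$ in place of $\rvek V$. The radial $\Radial$ is unchanged, and the independence of $\widetilde\Unif_n$ from $(\rvek V',\Radial)$ under $\mathbb P^{(\varrho)}$ is preserved, because $\rvek V'$ is a non-random linear function of $\rvek Z$.

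\textbf{Step 2 (quadratic form).} From \eqref{vekA}, $\vec A\vec A\T=\SIGMA-\vec M\vec K\vec M\T$, and $\vec M'\vec M'{}\T=\vec M\vec K\vec M\T$. Hence
\begin{align*}
\vec A\vec A\T+\vec M'\vec M'{}\T=\SIGMA,
\end{align*}
which is positive definite, so \eqref{eq:nonsingular} holds. Applying Proposition~\ref{prop:main} (or rather its proof) to $(\vec M',\rvek V')$ gives the funds
\begin{align*}
\SIGMA^{-1}\vec M'=\PI\T\vec K^{-1/2}.
\end{align*}
Their columns span $\mathrm{col}(\PI\T)$, because $\vec K^{-1/2}$ is invertible. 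Since only the span matters, the columns of $\PI\T$ may be used instead.

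\textbf{Step 3 (the fund $\vec f_0$).} The column space is unchanged: $\mathrm{col}(\vec M')=\mathrm{col}(\SIGMA\PI\T)$. So the nondegenerate case is exactly $\vec\mu_0\notin\mathrm{col}(\SIGMA\PI\T)$, and there $\vec f_0=\SIGMA^{-1}\vec\mu_0$ by Step 2. In the degenerate case, $\ker(\vec M'{}\T)=\ker(\vec K^{-1/2}\PI\SIGMA)=\ker(\PI\SIGMA)$. If $\PI$ is rank-deficient, the proof's ``w.l.o.g.\ full rank'' reduction changes neither spans nor kernels, so the same statements hold.

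\textbf{Main obstacle.} The only delicate point is justifying that the proof, rather than the literal statement of Proposition~\ref{prop:main}, allows $\vec M$ to be replaced by $\vec M'$. The statement's formula for $\vec f_j$ with the original $\vec M$ would give the quadratic form $\SIGMA-\vec M\vec K\vec M\T+\vec M\vec M\T$, which is not $\SIGMA$. The resolution is that Proposition~\ref{prop:main} holds for any representation \eqref{Yrepr} and any choice of $\rvek V$. The ensuing funds therefore all span the same space together with $\vec f_0$, as determined by the Fritz John conditions for the rescaled problem.
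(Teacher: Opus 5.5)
Your proof is correct, but it takes a different route from the paper's.

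The paper keeps $\rvek V=\rvek Z$ and re-runs the optimization from the proof of Proposition~\ref{prop:main}. With $\vec K=\LAMBDA+\PI\SIGMA\PI\T$, it rewrites $\vec M\T\pf=\vec d$ as $\PI\SIGMA\pf=\vec K\vec d$. It notes that on this set $\pf\T\vec A\vec A\T\pf=\pf\T\SIGMA\pf-\vec d\T\vec K\vec d$. It then writes the Fritz John conditions $\vec\mu_0\hat\lambda_0+\SIGMA\PI\T\hat{\vec\lambda}=\hat\lambda^*\SIGMA\pf^*$ and premultiplies by $\SIGMA^{-1}$.

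You place the same identity, $\vec A\vec A\T+\vec M\vec K\vec M\T=\SIGMA$, into the representation instead. Normalizing the factor as $\rvek V'=\vec K^{-1/2}\rvek Z$ makes the matrix $\vec A\vec A\T+\vec M'\vec M'{}\T$ of Proposition~\ref{prop:main} exactly $\SIGMA$. The funds then come straight from the proposition's formula, up to the invertible change of basis $\vec K^{-1/2}$.

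What your route buys:
\begin{itemize}
\item Nothing has to be re-optimized.
\item Rank-deficient $\PI$ is covered by the proposition itself.
\item The corollary becomes a consequence of the \emph{statement} of Proposition~\ref{prop:main}, not only of its proof, which is all the paper claims.
\end{itemize}

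Two remarks on your write-up. First, your ``main obstacle'' paragraph is over-cautious. The tuple $(\vec\mu_0,\vec M',\rvek V',\Radial,\vec A,\widetilde\Unif_n)$ is a representation \eqref{Yrepr} satisfying all the hypotheses, so the proposition applies as stated. Second, your closing claim that funds from different factorizations span the same space does not follow from what you wrote. It needs the separate fact that $(\vec A\vec A\T+\vec N\vec N\T)^{-1}$ maps $\mathrm{span}\{\vec\mu_0\}+\mathrm{col}(\vec M)$ onto the same subspace whenever $\mathrm{col}(\vec N)=\mathrm{col}(\vec M)$. That fact is true, but you do not need it, since ``can be taken'' requires only one admissible representation.

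What the paper's route buys: it stays with the selection variables themselves. It also exhibits the dominating portfolio as the solution of a mean-maximization problem stated directly in the primitives $\SIGMA$ and $\PI$, which is the form of the earlier result being generalized.
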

\begin{proof}
From \eqref{vekM}, $\vec M\T\pf= (\LAMBDA+\PI\SIGMA\PI\T)^{-1}\PI\SIGMA\pf$ equals $\vec d$ iff $\PI\SIGMA\pf=\vec\delta:=(\LAMBDA+\PI\SIGMA\PI\T)\vec d$. Furthermore, same constraint and \eqref{vekA} yields $\pf\T\vec A\vec A\T\pf=\pf\T\SIGMA\pf -\pf\T\vec M(\LAMBDA+\PI\SIGMA\PI\T)\vec M\T\pf$ to equal $\pf\T\SIGMA\pf-d_0^2$, $d_0^2$ being a constant. Thus for the non-degenerate part we can instead maximize $\pf\T\vec\mu_0$ subject to $\pf\T\SIGMA\pf=\sigma^2+d_0^2$ and $\PI\SIGMA\pf=\vec\delta$, the values taken from the candidate portfolio to be dominated. Repeat the first order condition argument as $\vec\mu_0\hat\lambda_0 + \SIGMA\PI\T\hat{\vec\lambda} = \hat\lambda^*\SIGMA\pf^*$ and pre-multiply by $\SIGMA^{-1}$ which is assumed to exist. If $\PI$ does not have full rank, then remove one or more equation from the constraint $\PI\SIGMA\pf=\vec\delta$. For the degenerate case, minimizing either quadratic form s.t.\ the linear constraint, will yield $\pf_*$.
\end{proof}

\subsection{Case $q=1$, the distributions of \cite{RePEc:eee:jetheo:v:29:y:1983:i:1:p:185-201} and \cite{5d09257c-83da-306d-ade7-ed20d4117e2f}, and mean--dispersion utility}
	The Simaan three-fund separation follows as a special case of $q=1$ -- Simaan does assume independence between our $\rvek V$ and our $\Radial$, as well as $\Radial_0\equiv 1$ and $\rv Y_0$ non-random. As we saw, we do not need $\Radial$ to be independent of everything else, only $\Unif$. We still get two funds from the generalization of the elliptical $\Radial_0 \vec\mu_0 + \Radial\vec A\Unif_\ell$, and then one more for each additional noise factor. Also Simaan's assumption of concave utility index is superfluous.
	
	The seminal results of \citet{RePEc:eee:jetheo:v:29:y:1983:i:1:p:185-201}
	find the elliptical class for his Theorem 1. For the main branch of his Theorem 2 it is our case $\vec\mu_0=0$, $q=1$, $\vec\mu_1\not\in\text{\rm col}(\vec A)$, $\ell=n-1$, square integrability and $\mathbb E\rv V\neq 0$ (his non-elliptical noise can attain either sign, so we cannot put it on $\vec \mu_0$). Then, Chamberlain characterizes the distributions over all portfolios when there is no riskless opportunity. He does however obtain \emph{two\/} fund separation by free disposal -- our method needs a third. 
	
	Nevertheless, assuming $\rv V$ has nonzero mean, then for our distribution class with $q=1$, $\vec\mu_0=0$ and non-random $\rv Y_0$, each univariate projection \emph{is indeed\/} determined by its (finite) mean and (finite) variance, though it is not within Chamberlain's Theorem 1. The reason lies in what that theorem actually characterizes: all possible terminal wealths, when also initial wealth is allowed to vary, so an inferior portfolio can be compensated for with higher initial wealth. In a different discourse of comparison, one may rather want to take initial wealth as a parameter and order the $\{\pf\T\rvek Y\}_{\pf\in\R^n}$, in which case also this larger distribution class works. This is of course in plain sight in Chamberlain, but might have been overlooked in the misunderstandings (criticized e.g.\ by \cite{markowitz2012great}) of conditions justifying mean--variance analysis.

\section{Discussion.}\label{sec:discussion}
We have seen that the stochastic representation form shared by skew-ellipticals and generalizations, is in itself sufficient for the portfolio separation result: compared to the elliptical case that requires two funds, we get one more per additional noise source. Thus that result comes from the properties of additive noise and \emph{conditional\/} ellipticity, that these distributions share, but not from the specific skew-elliptical or (weighted-)selection-elliptical construction. Inspecting the structure, we see that the elliptical noise source need not be independent, as long as its directional component is.

Unlike most of the portfolio optimization literature, we employ first-order stochastic dominance. We mention that this is the framework to use in multiperiod models: There is no straightforward dynamic concept of second-order stochastic dominance, but first-order dominance  \emph{does\/} generalize to Polish spaces with a partial order \citep{MR0494447}, or even merely a pre-order \citep{lindvall1999strassen}. Let us sketch a setup: If period to period wealth evolves according to \eqref{terminalwealth} with the random variables drawn anew independent of the past, then an agent applying $\pf^*$ can then immediately after the realization consume an extra $(\pf^*-\pf)\T\vec\mu_0\Radial_0$ while keeping post-consumption wealth process at same law (run the original strategy on a rotation of $\Unif$). This argument of course requires the model to have intermediate consumption (or allow intermediate free disposal) and also $\Radial_0$ be separately observable after realization -- in a typical model it is $\equiv 1$. See \citet{doi:10.1080/17442508.2015.1132218} for a continuous-time treatment.

Apart from that, the proof herein also shows that the funds sufficient for risk-averse agents in our model, also will be demanded by all vNM agents of increasing but not concave utility index -- and only in degenerate cases those agents will require one more, and then only if they are seeking dispersion; that also covers agents who take into account the probability of exceeding some wealth target.

\bibliographystyle{abbrvnat}
\bibliography{bib2026clean}

\end{document}